\numberwithin{equation}{section}
\theoremstyle{plain}
\newtheorem{Th}{Theorem}%[section]
\newtheorem{Cor}[Th]{Corollary}
\newtheorem{Pro}[Th]{Proposition}
 \theoremstyle{definition}
\newtheorem{?}[Th]{Problem}
\begin{document}

\title[Structure of distinguishability for a set of states in a GPT]{Realization of an arbitrary structure of perfect distinguishability of states in general probability theory}

\author[M.\ Weiner]{Mih\'aly Weiner}
\address{Department of Analysis, Institute of Mathematics, Budapest University of Technology and Economics,
  M\H{u}egyetem rkp. 3--9 H-1111 Budapest Hungary, and
  MTA-BME Lend\"ulet ``Momentum'' Quantum Information Theory Research Group}
\email{mweiner@math.bme.hu}

\thanks{Supported by the Bolyai J\'anos Fellowship of the Hungarian Academy of Sciences, the \'UNKP-22-5 New National Excellence Program of the Ministry for Innovation and Technology, by the NRDI grant K132097 and by the Ministry of Culture and Innovation and the National Research, Development and Innovation Office within the Quantum Information National Laboratory of Hungary (Grant No. 2022-2.1.1-NL-2022-00004).}

 \subjclass[2010]{}

 \keywords{}

\begin{abstract}
Let $s_1,s_2,\ldots s_n$ be states of a general probability theory, and 
$\mathcal A$ be the set of all subsets of indices $H \subset [n]\equiv\{1,2,\ldots n\}$ such that the states $(s_j)_{j\in H}$ are jointly perfectly distinguishable. All subsets with a single element are of course in $\mathcal A$, and since smaller collections are easier to distinguish, if $H\in \mathcal A$ and $L \subset H$ then $L\in \mathcal A$; in other words, $\mathcal A$ is a so-called {\it independence system} on the set of indices $[n]$. In this paper it is shown that every independence 
system on $[n]$ can be realized in the above manner.
\end{abstract}

\maketitle

\section{Introduction}

General probability theory (GPT)  was conceived in part to have a common framework for both classical and quantum probability theory and in part to explore ``all possibilities that nature could have chosen'' instead of quantum 
probability theory. Perhaps the root of this kind of intention was the study of possible correlations that can arise between two parts of a quantum system. Without any formal GPT structure, one may just consider the set of possible correlations subject to the {\it non-signalling} principle to observe
\cite{tsirelson,popescu_rohrlich} that the ones realizable by a quantum bipartite system form a strict subset. Consequently, it is natural to look for some further physical principle that somehow singles out quantum correlations \cite{infcaus} and to investigate how much better certain tasks could be performed using more general correlations (see e.g.\! \cite{zeroerror,sajat}).

Of course, when we consider a system {\it without} a division into parts, the structure becomes less rich. The concepts we can still talk about are the {\it state space} of the system and the set of possible {\it measurements}, which -- for our purposes -- are simply affine functions from the state space to a simplex (i.e.\! to the state space of a classical system). Since the first of these concepts completely determines\footnote{We assume the {\it no restriction principle} \cite{norestr}: all measurements that are theoretically possible (i.e.\! not ruled out by the convex structure of states) are feasible.} the latter one, ultimately every property of the undivided system is determined by the geometrical shape of the state space; e.g.\! its ``signalling dimension'' \cite{sigdim} or its ``information storability'', which in particular is known to depend on the amount of asymmetry \cite{geoinfcap} of the state space.

In both classical and quantum probability theory, if any two states of a certain collection can be perfectly distinguished, then in fact the whole collection is jointly perfectly distinguishable. However, in general -- when the state space $S$ of the system is not assumed to be neither a simplex (the classical case) or
the set of density operators on $\mathbb C^n$ (the quantum case), it can happen that there are three states $s_1,s_2,s_3\in S$ such that any two are perfectly distinguishable by {\it some} binary measurement, yet there is no measurement with three possible outcomes that would distinguish all of them with zero error.
In fact, in order for the states $s_1,s_2, \ldots s_{n+1} \in S$ to be jointly perfectly distinguishable (or henceforth in short: to be a ``j.p.d.\! collection''), at least they need to form a convex independent set, requiring $S$ to be at least $d=n$ dimensional. This is to be confronted with the fact that for every $d=1,2,\ldots$ one can give a $d$ dimensional state space $S$ and $n=2^d$ elements of $S$ such that any {\it two} can be perfectly distinguished; see \cite{rej}. So clearly, in general the condition of pairwise perfect distinguishability is much weaker than joint perfect distinguishability.

At this point one might wonder what can we say about the structure of perfect distinguishability in general. Let $S$ be a state space of a GPT (i.e.\! any convex set) and fix some elements $s_1,s_2,\ldots s_n\in S$ and consider the set of all subsets of indices 
$$
\mathcal A \equiv \{H\subset [n]\, | \, (s_j)_{j\in H} \; \textrm{is a j.p.d. collection} \}
$$
where we have used the notation $[n]\equiv \{1,2,\ldots n\}$. It is clear that all one element subsets are in $\mathcal A$ and as smaller collections are easier to distinguish, if $H\in \mathcal A$ and $L \subset H$ then $L\in \mathcal A$; so $\mathcal A$ is a so-called {\it independence system} on $[n]$. But is there anything else that can be established at this level of generality? The answer is negative, as we shall seen this by the explicit construction of Section
\ref{sec:constr} which, given an independence system $\mathcal A$ on $[n]$ produces a convex set $S$ and elements $s_1,s_2, \ldots s_{n} \in S$ such that for any $H\subset [n]$ we have that $(s_j)_{j\in H}$  is a j.p.d.\! collection if and only if $H \in \mathcal A$.

This is somewhat similar to the situation of joint measurability, except that for an arbitrary structure of joint measurability it suffices to consider the quantum case (so in the case of joint measurability there is no need to invoke a generic GPT model). A collection of measurements is said to be jointly measurable, if they have a common refinement. Given any independence system $\mathcal A$ on $[n]$, one can produce some measurements $M_1,M_2,\ldots M_n$ on some quantum system such that for any $H\subset [n]$ we have that $(M_j)_{j\in H}$  is a jointly measurable collection if and only if $H \in \mathcal A$; see \cite{joint_measurability} for both the precise definition of joint measurability and the actual construction.

we shall conclude this introduction by pointing out that though perfect distinguishability is a nice mathematical concept, physically it is more relevant to discuss the {\it amount} by which some states can be distinguished. Given some states $s_1,s_2, \ldots s_{n} \in S$, regardless whether they can be jointly perfectly distinguished or not, we might try to find a measurement $M$ with possible outcomes numbered from $1$ to $n$ such that for each $k$, in case the system is in state $s_k$, the measurement $M$ ends up with a ``good enough'' probability with its $k^{\rm th}$ outcome, rightly indicating that the system was in state $s_k$. The usually considered ``figure of merit'', showing how well we can distinguish these states by performing measurement $M$ is the so-called {\it symmetric error probability}, which is the sum (or in some other convention: the average) of the error probabilities; that is, the quantity
$$
\sum_{k=1}^n P({\rm outcome}\neq k|M\, {\textrm{is measured when the system is in state}\; s_k} ).
$$
Following the notations of \cite{milanek}, let us denote by $P_e^*(s_1,s_2\ldots s_n)$ the minimum (or infimum, if there is no minimum) of 
the above quantity when all measurements are considered. By its definition, $P_e^*$ is symmetrical in its arguments, takes values in the interval $[0,1]$ and equals to zero if its arguments form a j.p.d.\! collection. So for a fixed collection of states $s_1,s_2,\ldots s_n$
let us consider the function 
$$ 
[n]\supset H\mapsto F(H):=P_e^*((s_j)_{j\in H}).
$$
Since less states are easier to distinguish, if $L\subset H$, then $F(L)\leq F(H)$. Moreover, given a collection and one more state, we can still perform the optimal measurement for the initial collection (although it will result in a sure error, if the state of system is the added new one), showing that $F(H\cup \{j\})\leq F(H)+1$. However, apart from these trivial properties, what else can be established about $F$, in general? To my knowledge, there is very little we know about this, though for the case when the state space is assumed to be that of a quantum system, \cite{milanek} has a bound on $P_e^*(s_1,s_2\ldots s_n)$ in terms of the pairwise error probabilities $P_e^*(s_j,s_k)$ ($j\neq k$). So both the quantum case and the general one should be topic of further researches.

\section{Preliminaries}
\subsection{The simplex and the affine hyperplane embedding it}

Let $\mathbb R^n_1\subset \mathbb R^n$ denote the affine hyperplane
\begin{equation}
\mathbb R^n_1\equiv \{(t_1,\ldots t_n)\in \mathbb R^n \, | \, t_1+\ldots +t_n=1\}.
\end{equation}
Let further $\Delta_n\subset \mathbb R^n_1$ be the $n$-simplex; i.e.\! the convex set of classical probability distributions with $n$ terms:
\begin{equation}
\Delta_n \equiv \{(p_1,\ldots p_n)\in \mathbb R^n_1 \, |\, \forall j: p_j\geq 0 \}.
\end{equation}
For $m<n$, one might view $\Delta_m$ as a subset of $\Delta_n$ (a probability distribution with $m$ terms can be also viewed as a probability distribution with $n$ terms by appending $n-m$ zero terms to it) and similarly, $\mathbb R^m_1$ as a subset of 
$\mathbb R^n_1$. Then it is meaningful to consider the orthogonal projection $\pi_{n,m}:\mathbb R^n_1 \to \mathbb R^m_1$; we have 
that
\begin{equation}
\label{defofpi}
\pi_{n,m}((t_1,t_2,\ldots, t_k,t_{k+1},\ldots t_n)) = (t_1+r,t_2+r,\ldots t_k+r)
\end{equation}
where $r=(t_{m+1}+\ldots +t_{n})/m$. Note that $\pi_{n,m}:\mathbb R^n_1 \to \mathbb R^m_1$ is an {\it affine} (i.e.\! convex combination preserving) map and that $\pi_{n,m}(\Delta_n)=\Delta_m$.

\subsection{GPT in a nutshell}

To describe a physical system, we want to give a mathematical meaning to the physical notions of {\it state} and {\it measurement} (with finite many possible outcomes). Correspondingly, we shall have a set $S$ which we will refer to as the {\it state space} of the system and whose elements will represent  the possible states of the system, and for each $k=1,2,\ldots$, a set $\mathcal M_k$ whose elements will represent the {\it measurements} with $k$ (numbered) outcomes that we can perform. Given a measurement $M\in \mathcal M_k$ and a state $s\in S$, the model should specify a classical probability distribution $p_{M,s}\in \Delta_k$ which is interpreted as the probability distribution of the outcomes for $M$ if the measurement is performed on the system when it is in state $s$.

From a physical point of view, two states should be regarded the same if for all possible measurements they lead to the same outcome probability distribution. Similarly, two measurement should be regarded the same if in every state they always produce the same outcome probability distribution. Thus, through the evaluation $p$, elements of $S$ and $\mathcal M=\cup_{k=1}^\infty \mathcal M_k$ should {\it separate} each other. In particular, we may identify a state $s$ with the function $p_{\cdot,s}:\mathcal M \to \cup_{k=1}^\infty  \Delta_k$ which in fact, for each $k=1,2,\ldots$, maps $\mathcal M_k$ into $\Delta_k$. As such, it is intrinsically meaningful to consider convex combinations of states; after all, convex combinations of the {\it functions} $p_{\cdot,s_1}, p_{\cdot,s_2}$ are well-defined. Importantly, such a combination is again regarded as a state of the system. Note that in this way, the outcome probability distribution for a fixed measurement, as a function of the state, is -- by definition -- affine. Note also that the convex combination of states has a clear operational meaning: to obtain outcome statistics of the measurement $M$ in the ``mixed state'' $\lambda s_1 +(1-\lambda)s_2$ where $s_1,s_2\in S$ and the coefficients $\lambda,1-\lambda \in [0,1]$, one prepares the system in either state $s_1$ or $s_2$ with corresponding probabilities $\lambda$ and $1-\lambda$, then performs $M$.

Thus by what was explained, we assume that the state space is endowed with a structure making it a {\it convex set} and every measurement is given as an affine map from $S$ to a certain simplex. In what follows, we shall assume the {\it no restriction principle} \cite{norestr}: every ``theoretically possible measurement'', that is, every affine map from $S$ to a simplex, is actually a realizable measurement. Hence the whole probability model is completely determined by a single input: by the choice of the convex set $S$ playing the role of the state space of the physical system.

This principle has some physical motivations and it certainly holds in both the classical and quantum case. In the (finite) classical case $S=\Delta_n$ for some $n$, while in the (finite) quantum case $S$ is the set of density operators on $\mathbb C^n$ for some $n$. As is well-known, $\Phi$ is an affine function from the set of density operators on $\mathbb C^n$ to $\Delta_k$ if and only if there exist $k$ positive operators $E_1,E_2,\ldots E_k\geq 0$ on $\mathbb C^n$ whose sum is the identity, and $\Phi$ is the map
$$
\rho \mapsto ({\rm Tr}(\rho E_1), {\rm Tr}(\rho E_2),\ldots {\rm Tr}(\rho E_k)),
$$
giving back the usual picture where measurements on a quantum system are represented as {\it positive operator valued measures}.

\subsection{Joint perfect distinguishability}
Let $S$ be the state space of a GPT model. A collection of states $s_1,\ldots s_n\in S$ is said to be {\it jointly perfectly distinguishable} or in short: j.p.d., if there exists a measurement with outcomes $1,\ldots n$ such that for all $k=1,\ldots n$, in case the system is in state $s_k$, the measurement ends up with its $k^{\rm th}$ outcome with probability one (i.e.\! rightly indicates that out of the listed states, the system must had been in state $s_k$). Thus, by what was explained in words, $s_1,\ldots s_n\in S$ are j.p.d.\! if and only if there exists an affine map $\Phi:S\to \Delta_n$ such that for all $k=1,\ldots n:$ $(\Phi(s_k))_k=1$. In the quantum case, as is well-known, this requirement is equivalent to asking that the $n$ density operators have (pairwise) diagonal images.

Let $k\leq n$ and $s_1,\ldots s_n\in S$ a j.p.d.\! collection. This means that there is an affine map $\Phi:S\to \Delta_n$ such that $\Phi(j)$ is precisely the $j^{\rm th}$ vertex of $\Delta_n$ for each $j=1,\ldots n$. Then $\pi_{n,k}\circ\Phi$ is an affine map from $S$ to $\Delta_k$ still having the property that $\Phi(j)$ is precisely the $j^{\rm th}$ vertex of $\Delta_k$ for each $j=1,\ldots k$. Hence the smaller collection of states $s_1,\ldots s_k$ is j.p.d., as we naturally expected.

Finally note that the j.p.d.\! requirement for two states $s_1,s_2\in S$ is geometrically equivalent to asking $s_1,s_2$ to be a so-called {\it antipodal} pair of points of the convex set $S$, see \cite{rej}. Thus, the j.p.d.\! property may be viewed as a natural generalization of the mentioned concept giving rise to the notion of ``jointly antipodal collection'' of points. 

\section{The construction}
\label{sec:constr}

Let $n>1$ be an integer and $\mathcal A$ be an independence system on $[n]$. We shall say that a subset 
$H\subset [n]$ is {\it minimally dependent} (with respect to $\mathcal A$), if $H\notin \mathcal A$, but dropping any element of $H$ makes it belong to $\mathcal A$. Note that any minimally dependent set must have at least $2$ elements. Let $\mathcal O_\mathcal A$ be the set of all minimally dependent (with respect to $\mathcal A$) subsets of $[n]$. 

Our construction will start with the vertices of the $n$-simplex; so we shall set 
\begin{equation}
s_1=(1,0,0,\ldots), \; s_2=(0,1,0,\ldots), \, \ldots
\end{equation}
and so on. 
The problem is that if the state space $S$ was simply the convex hull of these points (i.e.\! the simplex $\Delta_n$), 
then $s_1,\ldots s_n$ would be a j.p.d.\! collection. In order to take account of the given independence system $\mathcal A$, for each 
$H\in \mathcal O_{\mathcal A}$ we shall add a number of new points to ``ruin'' the j.p.d.\! property for $(s_j)_{j\in H}$ and hence also for all larger collections, while leaving it ``intact'' for any $(s_j)_{j\in K}$ such that $K$ does not contain $H$.

For a subset $H\subset [n]$ we shall denote by $1_H$ the {\it indicator} of $H$; i.e.\! $1_H$ is the point of $\mathbb R^n$ whose $j^{\rm th}$ coordinate is $1$ if $j\in H$ and zero otherwise. Note that in particular $1_{\{j\}}=s_j$.

Now suppose $H\subset [n]$ has at least $m:=|H|\geq 2$ elements. 
If $H$ is a proper subset (i.e.\! $m<n$) then for every $j\in H$ we shall set
\begin{equation}
\label{defofq}
q_H^{(j)}:= -2\epsilon\, 1_{\{j\}} +\left(\frac{1-\epsilon(m-2)}{m-1} \right)1_{H\setminus\{j\}}+
\left(\frac{\epsilon m}{n-m}\right)1_{[n]\setminus H}
\end{equation}
where the constant $\epsilon=1/(3n^2)$. For the full set $[n]$ instead, we shall set 
\begin{equation}
q_{[n]}^{(j)}:= -\epsilon 1_{\{j\}} + \left(\frac{1+\epsilon}{n-1}\right) 1_{[n]\setminus \{j\}}.
\end{equation}
In both cases, one can easily check that in fact, the thus defined point $q_H^{(j)}\in \mathbb R^n$  belongs to $\mathbb R^n_1$; i.e.\! 
the sum of its coordinates is $1$. However, because its $j^{\rm th}$ coordinate is negative, $q_H^{(j)}\notin \Delta_n$ (but note that all of is other coordinates are positive). Note also that in both cases $-2\epsilon$ is a lower bound on all coordinates and as by a straightforward check 
\begin{equation}
\min\{\left(\frac{1-\epsilon(m-2)}{m-1} \right), \left(\frac{1+\epsilon}{n-1}\right)\} \geq \frac{2}{3n}=2n\epsilon,
\end{equation}
that $2n\epsilon$ is a lower bound on the value of the coordinates whose index belongs to $H\setminus j$. Finally, note that for $1<m<n$, we have that $\frac{1-\epsilon(m-2)}{m-1}> \frac{1}{3n}=\epsilon n > \frac{\epsilon m}{n-m}$, so in case $1<|H|<n$, we have that the coordinate value of $q_H^{(j)}$ associated to an index in $H\setminus \{j\}$ is always greater than that one associated to an index in $[n]\setminus H$.

We shall now introduce our state space of choice. Let $S$ be the convex hull of the set points formed by the vertices $s_j$ $(j\in [n])$ of the simplex $\Delta_n$ together with the newly introduced points $q^{(j)}_H$ for all minimally dependent set $H\in \mathcal O_{\mathcal A}$ and $j\in H$:
\begin{equation}
S:= \textrm{Conv}
    \big\{ \,
    \{s_j|j\in [n]\}
    \cup
    \{q_H^{(j)} | \, 
	H\in \mathcal O_{\mathcal A},\, j\in H \}
	\, 
	\big\}.
\end{equation}

\begin{Pro} For any $H\in \mathcal O_{\mathcal A}$, we have that $(s_j)_{j\in H}$ is {\bf not} a j.p.d. collection of the state space $S$.
\end{Pro}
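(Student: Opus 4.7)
My plan is to assume for contradiction that the collection $(s_j)_{j\in H}$ is j.p.d.\ and exhibit some $q_H^{(j)}\in S$ whose image under the distinguishing affine map has a negative coordinate, contradicting that the image must lie in the simplex. The key structural observation is that $s_1,\ldots,s_n$ are the vertices of $\Delta_n$, so they affinely span $\mathbb R^n_1$; consequently, any affine map out of $S\subset \mathbb R^n_1$ is completely determined by its values on $s_1,\ldots,s_n$, and applying it to any point $p\in S$ amounts to expanding $p=\sum_i p_i\, s_i$ in coordinates and taking the corresponding combination of the images.

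So suppose $\Phi:S\to \Delta_H$ is affine with $\Phi(s_j)$ equal to the $j$-th vertex $e_j$ of $\Delta_H$ for every $j\in H$. For $i\in[n]\setminus H$, I would write $\Phi(s_i)=\sum_{k\in H}\alpha_{i,k}\,e_k$ with $\alpha_{i,k}\geq 0$ and $\sum_{k\in H}\alpha_{i,k}=1$. Since $q_H^{(j)}\in S$, the point $\Phi(q_H^{(j)})$ lies in $\Delta_H$ and in particular its $j$-th coordinate is nonnegative. When $H$ is a proper subset with $m=|H|<n$, this coordinate simplifies, using formula (\ref{defofq}) together with $(e_k)_j=\delta_{k,j}$ for $k\in H$, to
\begin{equation*}
-2\epsilon+\frac{\epsilon m}{n-m}\sum_{i\in[n]\setminus H}\alpha_{i,j}\geq 0,
\end{equation*}
which rearranges to $\sum_{i\notin H}\alpha_{i,j}\geq 2(n-m)/m$ for every $j\in H$.

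The contradiction then comes from a double-counting: summing these $m$ inequalities over $j\in H$ and swapping the order of summation, the left-hand side becomes $\sum_{i\notin H}\sum_{j\in H}\alpha_{i,j}=\sum_{i\notin H}1=n-m$ by the normalization, while the right-hand side is $2(n-m)>0$. The leftover case $H=[n]$ is even more direct: then $\Phi$ is forced to be the identity on $\mathbb R^n_1$ (it sends the vertices of $\Delta_n$ to the vertices of $\Delta_n$), so $\Phi(q_{[n]}^{(j)})=q_{[n]}^{(j)}$, whose $j$-th coordinate is the genuinely negative number $-\epsilon$. I do not anticipate a real obstacle: the heart of the argument is a one-line averaging, and the constants appearing in the definition of $q_H^{(j)}$ --- particularly the factor of $2$ in the entry $-2\epsilon$ --- seem to have been engineered precisely so that this pigeonhole-style counting closes with room to spare.
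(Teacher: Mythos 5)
Your proof is correct and is essentially the paper's own argument in contrapositive form: the paper uses the same affine-extension observation, sets $\xi_j=\sum_{i\notin H}\alpha_{i,j}$ (as the coordinates of $\hat\Phi(1_{[n]\setminus H})$), and picks by averaging a single $j$ with $\xi_j\leq (n-m)/m$ so that the $j$-th coordinate of the image of $q_H^{(j)}$ is at most $-2\epsilon+\epsilon<0$, whereas you assume all such coordinates are nonnegative and sum over $j\in H$ to reach the same numerical clash ($n-m\geq 2(n-m)$). The case $H=[n]$ is handled identically in both.
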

\begin{proof}
Without loss of generality, we may assume that $H=[m]$ is precisely the first $m$ elements of $[n]$, where $m=|H|\geq 2$. We have to show that there exists no affine function $\Phi:S\to \Delta_m$ that would map each of the points $s_j$ $(j=1,\ldots m)$ into a separate vertex of the simplex $\Delta_m$ (in fact, without loss of generality: maps $s_j$ into the $j^{\rm th}$ vertex of $\Delta_m$, which -- with an abuse of notations -- we shall still denote by $s_j$). 

Since $\Delta_m\subset S$, an affine function $\Phi:S\to\Delta_m$ extends in a unique manner to be a $\hat{\Phi}:\mathbb R^n\to\mathbb R^m$ linear map which preserves the sum of the coordinates (as it must map $\mathbb R^n_1$ into $\mathbb R^m_1$).

So we shall prove our statement by showing that if $\hat{\Phi}:\mathbb R^n\to\mathbb R^m$ is a linear map such that it preserves the sum of coordinates and $\hat{\phi}(s_j)=s_j$ for all $j\in [m]$, then there exists a $j\in [m]$ such that $\hat{\Phi}(q_{[m]}^{(j)})\notin \Delta_m$ and hence that $\hat{\Phi(S)}\not\subset \Delta_m$. We shall treat separately the cases when $m=n$ and when $m<n$.

\smallskip

{\it Case 1: $m=n$}.  The points $s_j$ ($j\in [m]=[n]$) form a basis of $\mathbb R^n$, so the map $\hat{\Phi}$ must be the identity and hence the proof is finished since, as was noted, $q_{[n]}^{(j)}\notin \Delta_n$.

\smallskip
 
{\it Case 2: $2\leq m<n$}. The sum of the coordinates of $\xi:= \hat{\Phi}(1_{[n]\setminus H})$ is equal to that of $1_{[n]\setminus H}$: it is $n-m$. Hence at least one coordinate of $\xi\in \mathbb R^m$ -- say the $j^{\rm th}$ one must be smaller than or equal to $\frac{n-m}{m}$:
\begin{equation}
c:=(\xi)_j\leq \frac{n-m}{m}.
\end{equation}
Since $1_{\{j\}}=s_j$ and 
\begin{equation}
1_{H\setminus \{j\}}=1_{[m]\setminus \{j\}}=\sum_{k\leq m, k\neq j }s_k, 
\end{equation}
taking account of the fact that $\hat{\phi}(s_j)=s_j$ for all $j\in [m]$, a straightforward
substitution of the point $q_H^{(j)}$ given by formula (\ref{defofq}) into $\hat\Phi$ gives
\begin{equation}
\hat{\Phi}(q_{[m]}^{(j)})= -2\epsilon\, s_j +\left(\frac{1-\epsilon(m-2)}{m-1} \right)\sum_{k\leq m, k\neq j} s_k +
\left(\frac{\epsilon m}{n-m}\right)\xi,
\end{equation}
whose $j^{\rm th}$ coordinate is
\begin{equation}
(\hat{\Phi}(q_{[m]}^{(j)}))_j=-2\epsilon + \left(\frac{\epsilon m}{n-m}\right) c \leq -2\epsilon + \epsilon < 0,
\end{equation}
showing that $\hat{\Phi}(q_{[m]}^{(j)})\notin \Delta_m$.
\end{proof}

\begin{Pro} For any $H\in \mathcal A$, we have that $(s_j)_{j\in H}$ is a j.p.d.\! collection of the state space $S$.
\end{Pro}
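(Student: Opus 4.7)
The plan is to exhibit an explicit affine map $\Phi:S\to\Delta_m$ (with $m=|H|$) that sends each $s_j$ ($j\in H$) to the $j$-th vertex of $\Delta_m$. After relabelling, take $H=[m]$, and use as candidate the restriction to $S$ of the projection $\pi_{n,m}:\mathbb R^n_1\to\mathbb R^m_1$ introduced in the preliminaries. Its linear extension $\hat\Phi:\mathbb R^n\to\mathbb R^m$ preserves the sum of coordinates and satisfies $\hat\Phi(s_j)=s_j$ for $j\in[m]$, which is exactly what the j.p.d.\ requirement demands.

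Since $S$ is the convex hull of finitely many generators and $\hat\Phi$ is affine, it will suffice to check that $\hat\Phi$ maps each generator into $\Delta_m$. For $s_j$ with $j\notin[m]$ the image is the centroid $(1/m,\ldots,1/m)$, which lies in $\Delta_m$. The substantive task is to verify $\hat\Phi(q_K^{(i)})\in\Delta_m$ for every $K\in\mathcal O_{\mathcal A}$ and every $i\in K$.

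The key structural observation driving the argument is that $H\in\mathcal A$ together with $K\in\mathcal O_{\mathcal A}\subseteq 2^{[n]}\setminus\mathcal A$ forces $K\not\subseteq H$, for otherwise $K$ would inherit membership in the independence system $\mathcal A$ from $H$, contradicting $K\notin\mathcal A$. Hence $|K\setminus H|\geq 1$, which means a strictly positive share of the ``positive mass'' of $q_K^{(i)}$ lies outside the coordinates $[m]$ and gets smeared uniformly over all $m$ surviving coordinates by $\pi_{n,m}$. It is precisely this smeared-in mass that has to compensate the single negative entry $-2\epsilon$ (or $-\epsilon$ in the $K=[n]$ case) sitting in the $i$-th slot when $i\in H$.

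The remaining step is a direct calculation, splitting into the cases $K=[n]$ vs.\ $K\subsetneq[n]$ and $i\in H$ vs.\ $i\notin H$. In every case the coordinates of $\hat\Phi(q_K^{(i)})$ at indices $j\in[m]$ with $j\neq i$ are easily seen to be non-negative, since they are sums of manifestly non-negative contributions. The only truly worrisome coordinate is the $i$-th one when $i\in H$: the smeared-in mass coming from $K\setminus H$ contributes at least $\frac{1}{m}\cdot\frac{1-\epsilon(k-2)}{k-1}\geq\frac{2}{3m(k-1)}\geq\frac{2}{3n^2}=2\epsilon$, and the specific choice $\epsilon=1/(3n^2)$ is tailored exactly to make this cancel $-2\epsilon$. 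I expect this arithmetic bound, together with the parallel but slightly different bookkeeping for the case $i\notin H$ (where the negative self-weight of $q_K^{(i)}$ itself enters $r(x)$ and one must ensure it does not overwhelm $x_j$ for $j\in[m]\setminus K$), to be the only mildly technical step; everything else is essentially linearity plus the observation $K\not\subseteq H$.
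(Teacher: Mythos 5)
Your strategy is the same as the paper's: relabel $H=[m]$, use the projection $\pi_{n,m}$ as the distinguishing measurement, reduce to checking the finitely many generators of $S$, and exploit the key fact that $K\not\subseteq[m]$ for every $K\in\mathcal O_{\mathcal A}$ (since $[m]\in\mathcal A$ and subsets of members of $\mathcal A$ are in $\mathcal A$). Your arithmetic for the dangerous coordinate in the case $i\in H$ is also correct and matches the paper's Case 1: the at least one element of $K\setminus[m]$ carries coordinate weight at least $\frac{2}{3(k-1)}\geq 2n\epsilon$ wait--- at least $\frac{1-\epsilon(k-2)}{k-1}\geq\frac{2}{3n}$, whose smeared-in share $\geq 2\epsilon$ cancels $-2\epsilon$.

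However, there is a genuine gap in the case $i\notin H$. Your blanket claim that the coordinates of $\hat\Phi(q_K^{(i)})$ at indices $j\in[m]$, $j\neq i$, ``are sums of manifestly non-negative contributions'' is false precisely in the situation your final parenthetical alludes to: when $i\notin[m]$, the negative entry $-2\epsilon$ of $q_K^{(i)}$ sits at an index $>m$ and therefore enters the averaged term $r=\frac{1}{m}\sum_{b>m}(q_K^{(i)})_b$, which can itself be negative. Concretely, suppose $K\setminus[m]=\{i\}$ and $[m]\not\subseteq K$ (the paper's Case 2.b2) with $m=n-1$: then $r=-2\epsilon/m<0$, and the smallest coordinate of $\pi_{n,m}(q_K^{(i)})$, attained at any $j\in[m]\setminus K$, equals $\frac{\epsilon k}{n-k}-\frac{2\epsilon}{n-1}$; in general it equals $\frac{\epsilon k}{n-k}+\frac{1}{m}\left(-2\epsilon+(n-m-1)\frac{\epsilon k}{n-k}\right)$. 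Its non-negativity is not manifest: it is equivalent to $k(n-1)\geq 2(n-k)$, which does hold because $k\geq 2$, but this inequality is exactly the computation the paper performs and your proposal merely defers (``I expect this\ldots''). This is the one place where the specific choice of the coefficient $\frac{\epsilon k}{n-k}$ on the coordinates outside $K$ is actually needed, so asserting it rather than checking it leaves the proof incomplete. (The other sub-cases with $i\notin[m]$ are genuinely easy and your sketch covers them: if $|K\setminus[m]|\geq 2$ the positive weight of a second element of $K\setminus[m]$ compensates $-2\epsilon$ inside the average, and if $[m]\subseteq K$ all image coordinates equal $1/m$.)
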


\begin{proof}
As in the previous proof, we may assume that $H=[m]$ is precisely the first $m$ elements of $[n]$
The statement is trivial if $m<2$ or $m=n$ (in this latter case, $\mathcal O_{\mathcal A}$ is empty and $S=\Delta_n$). 
So suppose that $2\leq m<n$. We know that $\pi_{n,m}$ is affine and $\pi_{n,m}(s_j)=s_j$ for all $j\in [m]$; so if
$\pi_{n,m}(S)\subset \Delta_m$ also holds, then this shows that $(s_j)_{j\in [m]}$ is a j.p.d.\! collection of the 
state space $S$. 

In what follows we shall show that this is indeed the case. We will need to check that the coordinates of
$\pi_{n,m}(q_K^{(j)})$ are all non-negative whenever $K\in \mathcal O_{\mathcal A}$ and $j\in K$. Since 
$\pi_{n,m}(q_K^{(j)}) \in \mathbb R^m_1$, this non-negativity will then show that $\pi_{n,m}(q_K^{(j)}) \in \Delta_m$
whenever $K\in \mathcal O_{\mathcal A}$ and $j\in K$, which in turn will imply that $\pi_{n,m}(S)\subset \Delta_m$.

What remains is a computation of coordinate values of $\pi_{n,m}(q_K^{(j)})$. We will need a case-by-case
treatment depending on the possible position of $K$ and $j$ relative to the subset $H=[m]$. In what follows, 
we shall use the notation $\ell:=|K\setminus [m]|$. Note that $\ell$ must be at least one: 
since $[m]=H\in \mathcal A$ and $K$ is a minimal dependent set, $K$ cannot be a subset of $[m]$.

\smallskip

{\it Case 1: $j\in [m]$}. With the exception of its $j^{\rm th}$ coordinate, all coordinates of $q_K^{(j)}$ are positive. Taking account of the formula (\ref{defofpi}), the definition $q_K^{(j)}$ and the estimates we had on its coordinates and the fact that at least one element of $K\setminus\{j\}$ is outside of $[m]$, it follows that the smallest coordinate of $\pi_{n,m}(q_K^{(j)})$ is the $j^{\rm th}$ one, whose actual value is 
\begin{equation}
(q_K^{(j)})_j + \frac{1}{m}\sum_{b=m+1}^n(q_K^{(j)})_b \geq -2\epsilon + \frac{1}{n} 2n\epsilon = 0.
\end{equation}

{\it Case 2.a: $j \notin [m]$ and $\ell>1$}. 
Apart from $j$, there must be at least one more element of $K$ outside of $[m]$. Hence the sum of coordinates of $q_K^{(j)}$ corresponding to indices in $K\setminus [m]$ is at least $-2\epsilon + 2n \epsilon\geq 0$. Taking account of the fact that coordinates of $q_K^{(j)}$ corresponding to indices in $[m]$ are positive, this implies that all coordinates of $\pi_{n,m}(q_K^{(j)})$ are positive, too.
\smallskip

{\it Case 2.b1: $j \notin [m], \ell=1$ and $[m]\subset K$}.
It is straightforward to check that in this case all coordinates of $\pi_{n,m}(q_K^{(j)})$ are equal and hence they are all of magnitude
$\frac{1}{m}$, which is positive.  
\smallskip

{\it Case 2.b2: $j \notin [m], \ell=1$ and $[m]\not\subset K$}. Let $k:=|K|$; we have that $|K\cup [m]|=k-\ell=k-1<m$, 
since $[m]\not\subset K$. The value of every coordinate of $q_K^{(j)}$ associated to an index in $[n]\setminus ([m]\cup \{j\})$
is $\frac{\epsilon k}{n-k}$, so the sum of the coordinates of $q_K^{(j)}$ associated to indices in $[n]\setminus [m]$ is
\begin{equation}
-2 \epsilon + (n-m-1) \frac{\epsilon k}{n-k}.
\end{equation}
On the other hand, the minimal value of the coordinates of $q_K^{(j)}$ associated to indices in $[m]$ is also $\frac{\epsilon k}{n-k}$.
Thus the minimal coordinate value of $\pi_{n,m}(q_K^{(j)})$ is
\begin{equation}
\frac{\epsilon k}{n-k} + \frac{1}{m}\left(-2 \epsilon + (n-m-1) \frac{\epsilon k}{n-k}\right).
\end{equation}
After some rearrangement, multiplying the above expression by $\frac{m(n-k)}{\epsilon}$ makes $\epsilon$ and $m$
disappear from it and yields
\begin{equation}
k(n-1)-2(n-k)
\end{equation}
which is evidently positive, since $n>1$ and $k\geq 2$.
\smallskip

We exhausted all possibilities, so the proof is complete.  
\end{proof}

\begin{Cor}
For any $H\subset [n]$, we have that $(s_j)_{j\in H}$ is a j.p.d.\! collection of the state space $S$
if and only if $H\in \mathcal A$.
\end{Cor}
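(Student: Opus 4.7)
The plan is to derive the corollary directly from the two preceding propositions, which together already pin down the j.p.d.\! structure completely. The forward implication ``$H\in\mathcal A$ implies $(s_j)_{j\in H}$ is j.p.d.''\! is precisely the content of Proposition 2, so nothing beyond a citation is needed here.

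For the reverse implication I would argue by contrapositive. Suppose $H\subset[n]$ satisfies $H\notin\mathcal A$. Since every singleton belongs to $\mathcal A$ but $H$ itself does not, among the subsets of $H$ that fail to lie in $\mathcal A$ there is one of minimal cardinality; call it $H'$. By the very definition of $\mathcal O_\mathcal A$ (every proper subset of $H'$ lies in $\mathcal A$, but $H'$ itself does not), we have $H'\in\mathcal O_\mathcal A$. Proposition 1 then asserts that $(s_j)_{j\in H'}$ is not a j.p.d.\! collection of $S$.

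It remains to promote this non-j.p.d.\! property of the subcollection to the whole collection $(s_j)_{j\in H}$. Here I would invoke the hereditary observation already recorded in the preliminaries: if $\Phi:S\to\Delta_{|H|}$ were an affine map perfectly distinguishing $(s_j)_{j\in H}$, then after relabeling indices and composing with the appropriate projection $\pi_{|H|,|H'|}$ of the preliminaries, we would obtain an affine map $S\to\Delta_{|H'|}$ perfectly distinguishing the subcollection $(s_j)_{j\in H'}$, contradicting Proposition 1. Hence $(s_j)_{j\in H}$ cannot be j.p.d., completing the contrapositive.

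The only real subtlety — and thus the ``main obstacle,'' such as it is — is ensuring that the minimal dependent subset $H'$ actually exists and lies in $\mathcal O_\mathcal A$; this is immediate from the independence system axioms, but it is the one step where the combinatorial structure of $\mathcal A$ is used. Everything else is a direct assembly of Propositions 1 and 2 with the hereditariness of the j.p.d.\! property.
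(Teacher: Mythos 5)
Your proof is correct and is essentially the argument the paper intends (the corollary is stated there without proof, as a direct consequence of the two propositions): the forward direction is Proposition 2, and the reverse direction follows by extracting a minimally dependent subset $H'\in\mathcal O_{\mathcal A}$ of $H$, applying Proposition 1 to it, and invoking the hereditariness of the j.p.d.\! property via $\pi_{|H|,|H'|}$ as recorded in the preliminaries. No gaps; the existence of $H'$ via minimal cardinality is handled properly.
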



\begin{thebibliography}{99}

\bibitem{tsirelson}
B. S. Tsirelson: 
On the Einstein Podolsky Rosen paradox.
{\it Lett. Math. Phys.} {\bf 4} (1980), 93.

\bibitem{popescu_rohrlich}
S.\! Popescu and D.\! Rohrlich: 
Quantum nonlocality as an axiom.
{\it Found. Phys.} {\bf 24}, pg.\! (1994) 379--385. 

\bibitem{infcaus}
M.\! Pawlowski, T.\! Paterek, D.\! Kaszlikowski, V.\! Scarani, A.\! Winter and M.\! Zukowski:
Information causality as a physical principle.
{\it Nature} {\bf 461} (2009), pg.\! 1101--1104.

\bibitem{zeroerror}
T.\! S.\! Cubitt, D.\! Leung, W.\! Matthews and A.\! Winter:
Zero-Error Channel Capacity and Simulation Assisted by Non-Local Correlations.
{\it IEEE Trans. Inf. Theory} {\bf 57} (2011), pg.\! 5509--5523.

\bibitem{sajat}
P.E.\! Frenkel and M.\! Weiner:
On entanglement assistance to a noiseless classical channel.
{\it Quantum} {\bf 6} (2022), 662.

\bibitem{norestr}
P. Janotta and R. Lal: 
Generalized probabilistic theories without the no-restriction
hypothesis. 
{\it Phys. Rev.} {\bf A 87} (2013), 052131.

\bibitem{sigdim}
M.\! Dall'Arno, S.\! Brandsen, A.\! Tosini, F.\! Buscemi and V.\! Vedral:
No-hypersignaling principle.
{\it Phys. Rev. Lett.} {\bf  119} (2017), 020401.

\bibitem{geoinfcap}
K.\! Matsumoto and G.\! Kimura:
Information storing yields a point-asymmetry of state space in general probabilistic theories. arXiv:1802.01162

\bibitem{rej}
L. Lami, D. Goldwater and G. Adesso: 
A Post-Quantum Associative Memory.
arXiv:2201.12305

\bibitem{joint_measurability}
R. Kunjwal, C. Heunen and T. Fritz: 
Quantum realization of arbitrary joint measurability structures.
{\it Phys. Rev.} {\bf A 89} (2014), 052126.

\bibitem{milanek}
K.\! M.\!R.\! Audenaert and M.\! Mosonyi:
Upper bounds on the error probabilities and asymptotic error
exponents in quantum multiple state discrimination.
{\it J. Math. Phys.} {\bf 55} (2014), 102201.

 \end{thebibliography}
\end{document}